\documentclass[fleqn,11pt,twoside]{article}

\usepackage{amsthm,amsthm,amssymb, color, xcolor,epsfig, graphics, subfigure}

\usepackage{amsmath, graphicx, latexsym, lscape}

\makeatletter
\newcommand{\copyrightnote}[2]{{\renewcommand{\thefootnote}{}
 \footnotetext{\small\it
\begin{flushleft}
 \copyright \ #1   #2  
\end{flushleft}}}}

\newcommand{\Name}[1]{\begin{flushleft}
                       \LARGE \bf #1
                       \end{flushleft}\vspace{-3mm}}

\newcommand{\Author}[1]{\begin{flushleft}
                       \it #1 \end{flushleft}}

\newcommand{\Address}[1]{\begin{flushleft}
                       \it #1 \end{flushleft}}

\newcommand{\Date}[1]{\begin{flushleft}
                      \small  \it #1 \end{flushleft}}

\newcommand{\evenhead}{Author \ name}
\newcommand{\oddhead}{Article \ name}

\renewcommand{\@evenhead}{
\hspace*{-3pt}\raisebox{-15pt}[\headheight][0pt]{\vbox{\hbox to \textwidth
{\thepage \hfil \evenhead}\vskip4pt \hrule}}}
\renewcommand{\@oddhead}{
\hspace*{-3pt}\raisebox{-15pt}[\headheight][0pt]{\vbox{\hbox to \textwidth
{\oddhead \hfil \thepage}\vskip4pt\hrule}}}
\renewcommand{\@evenfoot}{}
\renewcommand{\@oddfoot}{}

\long\def\@makecaption#1#2{%
  \vskip\abovecaptionskip
  \sbox\@tempboxa{\small \textbf{#1.}\ \ #2}%
  \ifdim \wd\@tempboxa >\hsize
    {\small \textbf{#1.}\ \ #2}\par
  \else
    \global \@minipagefalse
    \hb@xt@\hsize{\hfil\box\@tempboxa\hfil}%
  \fi
  \vskip\belowcaptionskip}

\newcommand{\JNMPnumberwithin}[3][\arabic]{%
  \@ifundefined{c@#2}{\@nocounterr{#2}}{%
    \@ifundefined{c@#3}{\@nocnterr{#3}}{%
      \@addtoreset{#2}{#3}%
      \@xp\xdef\csname the#2\endcsname{%
        \@xp\@nx\csname the#3\endcsname .\@nx#1{#2}}}}%
}

\newcommand{\resetfootnoterule} {
  \renewcommand\footnoterule{%
  \kern-3\p@
  \hrule\@width.4\columnwidth
  \kern2.6\p@}
}

\renewcommand{\footnoterule}{}

\makeatother

\theoremstyle{definition}

\newtheorem{theorem}{Theorem}

\newcommand{\bbR}{{\mathbb R}}

\begin{document}

\renewcommand{\evenhead}{ {\LARGE\textcolor{blue!10!black!40!green}{{\sf \ \ \ ]ocnmp[}}}\strut\hfill 
N Gie{\ss}ing and Yu B Suris
}
\renewcommand{\oddhead}{ {\LARGE\textcolor{blue!10!black!40!green}{{\sf ]ocnmp[}}}\ \ \ \ \  
Conserved quantities of discretizations by polarization
}

\thispagestyle{empty}
\newcommand{\FistPageHead}[3]{
\begin{flushleft}
\raisebox{8mm}[0pt][0pt]
{\footnotesize \sf
\parbox{150mm}{{\textcolor{blue!10!black!40!green}{{\bf Open Communications in Nonlinear Mathematical Physics}}}
\ \ {Special Issue: Hietarinta}, 2026\\[0.1cm]
\strut\hfill 
ocnmp:18664
pp #2\hfill {\sc #3}}}\vspace{-13mm}
\end{flushleft}}

\FistPageHead{1}{\pageref{firstpage}--\pageref{lastpage}}{ \ \ }

\strut\hfill

\strut\hfill

\copyrightnote{The authors. Distributed under a Creative Commons Attribution 4.0 International License}

\begin{center}

{\bf {\large A Special OCNMP Issue in Honour of Jarmo Hietarinta}}\\[0.2cm]
{\bf {\large on the Occasion of his 80th Birthday}}
\end{center}

\smallskip

\Name{Conserved quantities of discretizations by polarization}

\Author{Noah Gie{\ss}ing}

\Address{FIZ Karlsruhe, Hermann-von-Helmholtz-Platz 1, 76344 Eggenstein-Leopoldshafen, Germany, e-mail: {\tt noah.giessing@fiz-karlsruhe.de}}

\Author{Yuri B. Suris}

\Address{Institut f\"ur Mathematik, MA 7-1, 
Technische Universit\"at Berlin, Str. des 17. Juni 136, 10623 Berlin, Germany, 
e-mail: {\tt suris@math.tu-berlin.de}}

\Date{Received June 30, 2026; Accepted July 29, 2026}

\setcounter{equation}{0}

\smallskip

\noindent
{\bf Citation format for this Article:}\newline
Noah Gie{\ss}ing and Yuri B. Suris,
Conserved quantities of discretizations by polarization,
{\it Open Commun. Nonlinear Math. Phys.}, Special Issue:\,Hietarinta, ocnmp:18664, \pageref{firstpage}--\pageref{lastpage}, 2026.

\strut\hfill

\noindent
{\bf The permanent Digital Object Identifier (DOI) for this Article:}\newline
{\it 10.46298/ocnmp.18664}
\strut\hfill

\begin{abstract}

\noindent 
Recently, a family of unconventional integrators for higher order ODEs with polynomial vector fields was proposed, based on the polarization of vector fields. The simplest instance is the by now famous Kahan discretization for first order ODEs with quadratic vector fields. All these integrators possess remarkable conservation properties. In particular, for the first and the second order Hamiltonian ODEs, the discretization by polarization possesses an integral of motion and an invariant volume form. In this note, we extend our previously proposed algebraic approach to derivation of these integrals to discretizations of ODEs of an arbitrary order. For all orders $\ge 3$, these integrals are new.

\end{abstract}

\label{firstpage}


\section{Introduction}

This paper is a further development of \cite{S} and is devoted to integrals of motion of a special numerical scheme for ordinary differential equations of higher order in $\mathbb R^d$ of the type
\begin{equation}\label{eq differential deg m}
x^{(m)}=f(x),
\end{equation}
where all components of the vector field $f$ are polynomials of degree $m+1$. The {\em discretization by polarization} with the stepsize $\epsilon$ of such an equation, as introduced and studied in \cite{HQ, MMQ}, is the following difference equation:
\begin{equation}\label{eq difference deg m}
\Delta^m x_n={\rm pol}_{m+1}f(x_n,\ldots,x_{n+m}).
\end{equation}
Here $\Delta^m$ is the forward difference operator,
\begin{equation}\label{eq difference deg m 0}
\Delta^m x_n=\frac{1}{\epsilon^m}\sum_{j=0}^m (-1)^j\binom{m}{j}x_{n+m-j},
\end{equation}
while ${\rm pol}_{m+1}f(x_n,\ldots,x_{n+m})$ is the {\em polarization} of the degree $m+1$ polynomial $f$. This is a symmetric $(m+1)$-linear form satisfying
${\rm pol}_{m+1} f(x,\ldots,x)=f(x)$, whose precise definition will be discussed in Section \ref{sect polarization}.  Actually, we will consistently omit the index $n$ (or, equivalently, set $n=0$) in difference equations like \eqref{eq differential deg m}, thus abbreviating it to
\begin{equation}\label{eq differential deg m from 0}
\Delta^m x_0=\frac{1}{\epsilon^m}\sum_{j=0}^m (-1)^j\binom{m}{j}x_{m-j}={\rm pol}_{m+1}f(x_0,\ldots,x_{m}).
\end{equation}
This equation is linear with respect to $x_m$, thus can be solved to give a rational expression
\begin{equation}
x_m=\Phi_\epsilon(x_0,\ldots,x_{m-1}).
\end{equation}
Moreover, this equation is symmetric with respect to $j\leftrightarrow m-j$, $\epsilon\to -\epsilon$, therefore the previous relation can be reversed to
\begin{equation}
 x_0=\Phi_{-\epsilon}(x_m,\ldots,x_1).
 \end{equation}
 In other words, the map
 \begin{equation}\label{eq f difference}
\Psi_\epsilon:  \begin{pmatrix} x_0 \\ x_1 \\ \vdots \\ x_{m-1}\end{pmatrix} \mapsto  \begin{pmatrix} x_1 \\ \vdots\\ x_{m-1}  \\ x_{m}\end{pmatrix} =
 \begin{pmatrix} x_1 \\ \vdots \\ x_{m-1} \\ \Phi_\epsilon(x_0,\ldots,x_{m-1})\end{pmatrix} 
 \end{equation}
 is birational, with the inverse map
 $$
\Psi_\epsilon^{-1}:  \begin{pmatrix} x_1 \\ \vdots \\ x_{m-1}  \\ x_{m}\end{pmatrix} \mapsto  \begin{pmatrix} x_0 \\ x_1 \\ \vdots \\ x_{m-1}\end{pmatrix} =
 \begin{pmatrix} \Phi_{-\epsilon}(x_m,\ldots,x_1) \\ x_1 \\ \vdots \\ x_{m-1}\end{pmatrix} .
 $$

The by now famous Kahan discretization \cite{K} is the particular case of discretization by polarization with $m=1$, a one-step numerical method for ODEs in $\bbR^d$,
\begin{equation}\label{eq dyn syst}
\dot x=f(x),
\end{equation}
where all components of the vector field $f$ are polynomials of degree 2. The Kahan discretization with the stepsize $\epsilon$ is the following difference equation:
\begin{equation}\label{eq Kahan}
(x_{1}-x_0)/\epsilon ={\rm pol}_2f(x_0,x_{1}).
\end{equation}
This defines a birational map $x_1=\Psi_\epsilon(x_0)$, with the inverse map $x_0=\Psi_{-\epsilon}(x_1)$.

Similarly, the particular case of discretization by polarization with $m=2$ introduced in \cite{HQ}, is a one-step numerical method for the second order ODEs in $\mathbb R^d$, 
\begin{equation}\label{eq 2nd order}
\ddot x =f(x),
\end{equation}
where all components of $f(x)$ are polynomials of degree 3. The polarization discretization of such an equation with the stepsize $\epsilon$ is the following second order difference equation:
\begin{equation}\label{eq 2nd order discrete}
(x_{2}-2x_1+x_{0})/\epsilon^2={\rm pol}_3f(x_{0},x_1,x_{2}).
\end{equation}
This defines a birational map $(x_1,x_2)=\Psi_\epsilon(x_0,x_1)$, which enjoys the symmetry with respect to $x_{0}\leftrightarrow x_{2}$, $\epsilon\to -\epsilon$. 

For the cases $m=1,2$, remarkable results concerning existence of integrals of motion (and of invariant measures) have been established in \cite{CMOQ1}, resp. in \cite{HQ,MMQ}.

If $m=1$ and $f(x)=K\nabla H(x)$, where $H:\mathbb R^d\to \mathbb R$ is a cubic polynomial and $K\in {\rm so}(d)$ is a non-degenerate skew-symmetric matrix, so that equation \eqref{eq dyn syst} is Hamiltonian, then its Kahan discretization \eqref{eq Kahan} possesses an integral of motion \cite{CMOQ1}. 

Likewise, if $m=2$ and $f(x)=-K\nabla U(x)$, where $U:\mathbb R^d\to\mathbb R$ is a polynomial of degree 4 and $K\in {\rm Symm}(d)$ is a non-degenerate symmetric matrix, then equation \eqref{eq 2nd order} is equivalent to a canonical Hamiltonian system with the Hamilton function $H(x,p)=\frac{1}{2}\langle p, Kp\rangle+U(x)$. Indeed, equations of motion of the latter read $\dot x=Kp$, $\dot p=-\nabla U(x)$.  The discretization by polarization of this system, given by \eqref{eq 2nd order discrete}, possesses an integral of motion \cite{HQ, MMQ}. 

In \cite{S}, a novel derivation and algebraic interpretation of these results has been given. The goal of the present paper is to show that the algebraic mechanism uncovered in \cite{S} is actually valid for any $m$. The integrals thus found have been previously unknown (for $m\ge 3$). An invariant volume form for these discrete systems has been found in \cite{MMQ}.

\section{Polarization}
\label{sect polarization}

For a {\em homogeneous} polynomial $F$ of degree $m+1$, one defines
$$
{\rm pol}_{m+1} F(x_0,\ldots,x_m)=\frac{1}{(m+1)!}\sum_{\stackrel{1\le j\le m+1 }{ 0\le i_1 < \ldots < i_j\le m}} (-1)^{m+1-j}F(x_{i_1}+\ldots+x_{i_j}).
$$
This is the symmetric $(m+1)$-linear form which turns into the form $F(x)$ of degree $m+1$ on the diagonal:
$$
{\rm pol}_{m+1} F(x,\ldots,x)=F(x).
$$
For instance, for a quadratic form $Q(x)$, its polarization is the symmetric bilinear form,
$$
{\rm pol}_2Q(x_0,x_1)=\frac{1}{2}\big(Q(x_0+x_1)-Q(x_0)-Q(x_1)\big).
$$
Similarly, for a cubic form $C(x)$, its polarization is the symmetric trilinear form 
\begin{eqnarray*}
{\rm pol}_3C(x_0,x_1,x_2) & = & \frac{1}{6}\big(C(x_0+x_1+x_2)-C(x_0+x_1)-C(x_0+x_2)-C(x_1+x_2)\\
 & & +C(x_0)+C(x_1)+C(x_2)\big).
\end{eqnarray*}

For a {\em non-homogeneous} polynomial $F(x)$ of degree $m+1$, one extends it to a form of degree $m+1$ in homogeneous coordinates, $\widetilde F(x,z)=z^{m+1} F(x/z)$, computes the $(m+1)$-linear symmetric form ${\rm pol}_{m+1}\widetilde F$ and then sets ${\rm pol}_{m+1} F={\rm pol}_{m+1}\widetilde F|_{z_0=\ldots =z_m=1}$. One can check directly that for a homogeneous polynomial $F_k$ of degree $k\le m$ one has
\begin{equation}\label{eq pol incomplete}
{\rm pol}_{m+1} F_k(x_0,\ldots,x_m)=\frac{1}{\binom{m+1}{k}}\sum_{0\le i_1 < \ldots < i_k \le m} {\rm pol}_k F_k(x_{i_1}, \ldots, x_{i_k}).
\end{equation}
For instance, for a linear form $L(x)$ we obtain 
\begin{eqnarray*}
{\rm pol}_2L(x_0,x_1) & = & \frac{1}{2}\big(L(x_0)+L(x_1)\big),\\
{\rm pol}_3L(x_0,x_1,x_2) & = & \frac{1}{3}\big(L(x_0)+L(x_1)+L(x_2)\big),
\end{eqnarray*}
while for a quadratic form $Q(x)$ we obtain 
$$
{\rm pol}_3Q(x_0,x_1,x_2)=\frac{1}{3}\big({\rm pol}_2Q(x_0,x_1)+{\rm pol}_2Q(x_0,x_2)+{\rm pol}_2Q(x_1,x_2)\big).
$$

\section{Main result}
\label{sect main}

\begin{theorem}\label{th main}
Consider system \eqref{eq differential deg m}, where
\begin{equation}
f(x)=(-1)^{m+1}K\nabla U(x),
\end{equation}
where $U:\mathbb R^d\to\mathbb R$ is a polynomial of degree $m+2$, while $K$ is a non-degenerate $d\times d$ matrix,
\begin{equation}
\left\{\begin{array}{ll} K\in{\rm so}(d), & \; {\rm if}\;\; m\;\;{\rm odd}, \\ 
K\in{\rm Symm}(d), & \; {\rm if}\;\; m\;\;{\rm even}. \end{array}\right.
\end{equation}
Denote $U(x)=\sum_{k=1}^{m+2} U_k(x)$, where $U_k(x)$ are homogeneous polynomials of degree $k$.
Then discretization by polarization \eqref{eq difference deg m 0} possesses a conserved quantity
\begin{equation}\label{eq H eps}
H_\epsilon(x_0,\ldots,x_m)=T_\epsilon(x_0,\ldots,x_m)+V_\epsilon(x_0,\ldots,x_m),
\end{equation}
where
\begin{eqnarray}
T_\epsilon(x_0,\ldots,x_m) & = & \frac{1}{\epsilon^m}\sum_{k=0}^{m} (-1)^k\binom{m}{k}\sum_{j=0} ^ {m-k-1}\langle x_{j},K^{-1}x_{j+k+1}\rangle, \\
V_\epsilon(x_0,\ldots,x_m) & = & \sum_{k=1}^{m+1} (m+2-k)\;{\rm pol}_{m+1}U_{k}(x_0,\ldots,x_m).
\end{eqnarray}
\end{theorem}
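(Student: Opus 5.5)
The plan is the mechanism of \cite{S}: contract the two consecutive instances of the difference equation with the two ``outer'' points $x_0$ and $x_{m+1}$, subtract, and recognize both sides as increments. Concretely, suppose $(x_0,\ldots,x_{m+1})$ satisfies \eqref{eq differential deg m from 0} at $n=0$ and at $n=1$. Multiplying by $K^{-1}$, the two equations read
\[
\epsilon^{-m}\sum_{j=0}^m (-1)^j\tbinom{m}{j}K^{-1}x_{m-j}=(-1)^{m+1}{\rm pol}_{m+1}\nabla U(x_0,\ldots,x_m),
\]
and the same with all indices shifted by one. I will pair the first with $x_{m+1}$ and the second with $x_0$, then take the difference with the sign $(-1)^{m+1}$ that matches the symmetry type of $K^{-1}$. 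Here $K^{-1}$ is skew-symmetric for $m$ odd and symmetric for $m$ even, so $\langle a,K^{-1}b\rangle=(-1)^{m}\langle b,K^{-1}a\rangle$. The goal is to show
\[
H_\epsilon(x_1,\ldots,x_{m+1})-H_\epsilon(x_0,\ldots,x_m)=0 .
\]

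\textbf{Step 1 (potential part).} Homogenize: write $U_k(x)$ in the form $\widetilde U(x,z)=\sum_k z^{m+2-k}U_k(x)$, a form of degree $m+2$ in $(x,z)$. For a form $F$ of degree $m+2$ one has the identity
\[
\langle y,{\rm pol}_{m+1}\nabla F(x_0,\ldots,x_m)\rangle=(m+2)\,{\rm pol}_{m+2}F(x_0,\ldots,x_m,y),
\]
where the gradient is taken in all variables, including $z$. Apply it to $\widetilde U$ with $y=(x_{m+1},1)$, and again with the points $(x_1,\ldots,x_{m+1})$ and $y=(x_0,1)$. The two right-hand sides coincide, by symmetry of ${\rm pol}_{m+2}\widetilde U$ in its $m+2$ arguments. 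The $x$-gradient contributions are exactly the pairings coming from the equations. The $z$-gradient contributions, evaluated at $z_i=1$, are $\sum_k (m+2-k){\rm pol}_{m+1}U_k$ on each window; note that $U_{m+2}$ drops out automatically. Using formula \eqref{eq pol incomplete} to pass between ${\rm pol}_{m+1}$ of $\partial_z\widetilde U$ and of the $U_k$, this yields
\[
\langle x_{m+1},{\rm pol}_{m+1}\nabla U(x_0,\ldots,x_m)\rangle-\langle x_0,{\rm pol}_{m+1}\nabla U(x_1,\ldots,x_{m+1})\rangle=V_\epsilon(x_1,\ldots,x_{m+1})-V_\epsilon(x_0,\ldots,x_m).
\]
The signs need to be checked carefully here.

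\textbf{Step 2 (kinetic part).} Show that the corresponding combination of the difference sides,
\[
\epsilon^{-m}\sum_{j}(-1)^j\tbinom{m}{j}\Big(\langle x_{m+1},K^{-1}x_{m-j}\rangle\pm\langle x_0,K^{-1}x_{m+1-j}\rangle\Big),
\]
equals $\pm\big(T_\epsilon(x_1,\ldots,x_{m+1})-T_\epsilon(x_0,\ldots,x_m)\big)$. In $T_\epsilon$, the term with lag $k+1$ is $\sum_j\langle x_j,K^{-1}x_{j+k+1}\rangle$. Shifting the window removes the pair $(x_0,x_{k+1})$ and adds the pair $(x_{m-k},x_{m+1})$; all other pairs telescope. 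The pairs that survive are precisely those appearing above, with lag $m-j+1$ and coefficient $\binom{m}{j}$. The parity rule for $K^{-1}$ is what lets $\langle x_{m+1},K^{-1}x_{m-j}\rangle$ be rewritten as $(-1)^m\langle x_{m-j},K^{-1}x_{m+1}\rangle$, so that the terms match up.

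\textbf{Assembly and main obstacle.} Combining Steps 1 and 2 with the two equations gives the vanishing of the increment of $T_\epsilon+V_\epsilon$. I expect the main obstacle to be sign and coefficient bookkeeping rather than anything conceptual. This includes the overall factor $(-1)^{m+1}$ in $f$ against the parity of $K$, and matching the index $k$ in $T_\epsilon$ to $j$ in $\Delta^m$ (via $k=m-j$). Making the homogenization identity produce exactly the weights $m+2-k$ after setting $z_i=1$ also needs care. I would first verify everything for $m=1$ and $m=2$ against \cite{S}, where the formulas are known, and then carry out the general computation.
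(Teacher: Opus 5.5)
Your proposal is correct. Its skeleton is the paper's: pair the equation on $(x_0,\ldots,x_m)$ with $x_{m+1}$ and the shifted one with $x_0$, subtract, and telescope the kinetic part exactly as you describe. To settle the signs you left open: the combination is the plain difference for both parities. After the flip $\langle x_{m+1},K^{-1}x_{m-j}\rangle=(-1)^m\langle x_{m-j},K^{-1}x_{m+1}\rangle$, its $\Delta^m$ side equals $(-1)^m\big(T_\epsilon(x_1,\ldots,x_{m+1})-T_\epsilon(x_0,\ldots,x_m)\big)$. By your Step 1, its right-hand side is $(-1)^{m+1}\big(V_\epsilon(x_1,\ldots,x_{m+1})-V_\epsilon(x_0,\ldots,x_m)\big)$. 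Hence the increment of $T_\epsilon+V_\epsilon$ vanishes.

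Where you genuinely differ is the potential part. The paper works degree by degree in $\mathbb{R}^d$:
\begin{itemize}
\item it expands ${\rm pol}_{m+1}\nabla U_k$ via \eqref{eq pol incomplete} into averages over $(k-1)$-subsets;
\item Euler's identity then gives $k\,{\rm pol}_kU_k$ summed over $k$-subsets of $\{0,\ldots,m+1\}$ containing $m+1$ (resp.\ $0$);
\item it cancels the subsets containing both endpoints and adds a vanishing sum to get a difference of window sums;
\item it reassembles via \eqref{eq pol incomplete}, with weights from $k\binom{m+1}{k}/\binom{m+1}{k-1}=m+2-k$.
\end{itemize}

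Your homogenization to a single form $\widetilde U$ of degree $m+2$ on $\mathbb{R}^{d+1}$ replaces all of this by one Euler identity and the full symmetry of ${\rm pol}_{m+2}\widetilde U$ at the $m+2$ points $(x_i,1)$. The $x$-components of ${\rm pol}_{m+1}\nabla\widetilde U$ at $z_i=1$ are exactly ${\rm pol}_{m+1}\nabla U$. The $z$-component $\partial_z\widetilde U=\sum_k(m+2-k)z^{m+1-k}U_k$ yields $V_\epsilon$ directly from the definition of ${\rm pol}_{m+1}$ for non-homogeneous polynomials, so \eqref{eq pol incomplete} is not even needed.

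Your route is shorter and explains the weights $m+2-k$ and the absence of $U_{m+2}$ conceptually: $V_\epsilon$ is ${\rm pol}_{m+1}$ of $\partial_z\widetilde U(x,1)=(m+2)U-\langle x,\nabla U\rangle$. It is essentially the ``extra variable'' of Remark 2 made explicit. The paper's route stays in $\mathbb{R}^d$ and keeps the subset bookkeeping visible, at the cost of more index manipulation.
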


{\bf Proof.} 
We give details for the case $m$ even (the case $m$ odd is very similar). The departure point is two formulas. The first is obtained by taking the scalar product of equation \eqref{eq difference deg m 0} with $x_{m+1}$:
\begin{equation}\label{eq aux 1}
\frac{1}{\epsilon^m}\sum_{k=0}^m (-1)^k\binom{m}{k}\langle x_{m+1}, K^{-1} x_{m-k}\rangle=-\sum_{k=1}^{m+2} \langle x_{m+1}, 
{\rm pol}_{m+1}\nabla U_{k}(x_{0},\ldots,x_{m})\rangle.
\end{equation}
The second is obtained by taking the scalar product of the upshifted version of \eqref{eq difference deg m 0} with $x_{0}$:
\begin{equation}\label{eq aux 2}
\frac{1}{\epsilon^m}\sum_{k=0}^m (-1)^k\binom{m}{k}\langle x_{0}, K^{-1} x_{m-k+1}\rangle=-\sum_{k=1}^{m+2} \langle x_{0}, 
{\rm pol}_{m+1}\nabla U_{k}(x_{1},\ldots,x_{m+1})\rangle.
\end{equation}

We start with the difference of their left-hand sides. Changing index in the second one $k\to m-k$ and taking into account the symmetry of the matrix $K^{-1}$, we find:
$$
\frac{1}{\epsilon^m}\sum_{k=0}^m (-1)^k\binom{m}{k}\langle x_{m-k}, K^{-1} x_{m+1}\rangle-\frac{1}{\epsilon^m}\sum_{k=0}^m (-1)^k\binom{m}{k}\langle x_{0}, K^{-1} x_{k+1}\rangle.
$$
Due to telescoping, the difference of left-hand sides of \eqref{eq aux 1} and \eqref{eq aux 2} equals:
$$
\frac{1}{\epsilon^m}\sum_{k=0}^m (-1)^k\binom{m}{k}\sum_{j=1} ^ {m-k}\langle x_{j},K^{-1}x_{j+k+1}\rangle-\frac{1}{\epsilon^m}\sum_{k=0}^m (-1)^k\binom{m}{k}\sum_{j=0} ^ {m-k-1}\langle x_{j},K^{-1}x_{j+k+1}\rangle
$$
\begin{equation}\label{eq aux 3}
=T_\epsilon(x_1,\ldots,x_{m+1})-T_\epsilon(x_0,\ldots,x_m).
\end{equation}

We proceed with the right-hand sides of formulas \eqref{eq aux 1} and \eqref{eq aux 2}. According to \eqref{eq pol incomplete}, the right-hand side of \eqref{eq aux 1} can be written as
$$
-\sum_{k=1}^{m+2}\frac{1}{\binom{m+1}{k-1}}\sum_{0 \le i_1 < \ldots < i_{k-1} \le m}  \langle x_{m+1}, {\rm pol}_{k-1} \nabla U_k(x_{i_1}, \ldots, x_{i_{k-1}})\rangle.
$$
By Euler theorem about homogeneous polynomials, this equals
$$
-\sum_{k=1}^{m+2}\frac{k}{\binom{m+1}{k-1}}\sum_{0 \le i_1 < \ldots < i_{k-1} \le m} {\rm pol}_{k} U_k(x_{i_1}, \ldots, x_{i_{k-1}},x_{m+1})
$$
$$
=-\sum_{k=1}^{m+2}\frac{k}{\binom{m+1}{k-1}}\sum_{0 \le i_1 < \ldots < i_{k-1} < i_{k}=m+1} {\rm pol}_{k} U_{k}(x_{i_1}, \ldots, x_{i_{k-1}},x_{i_{k}}).
$$
Similarly, the right-hand side of \eqref{eq aux 2} can be transformed as follows:
$$
-\sum_{k=1}^{m+2}\frac{1}{\binom{m+1}{k-1}}\sum_{1 \le i_2 < \ldots < i_{k} \le m+1}  \langle x_{0}, {\rm pol}_{k-1} \nabla U_{k}(x_{i_2}, \ldots, x_{i_{k}})\rangle
$$
$$
=-\sum_{k=1}^{m+2}\frac{k}{\binom{m+1}{k-1}}\sum_{1 \le i_2 < \ldots < i_{k} \le m+1}  {\rm pol}_{k} U_{k}(x_{0},x_{i_2}, \ldots, x_{i_{k}})
$$
$$
=-\sum_{k=1}^{m+2}\frac{k}{\binom{m+1}{k-1}}\sum_{0=i_1 < i_2 < \ldots < i_{k} \le m+1}  {\rm pol}_{k} U_{k}(x_{i_1}, x_{i_2}, \ldots, x_{i_k}).
$$
Thus, the difference of the right-hand sides of equations \eqref{eq aux 1} and \eqref{eq aux 2} equals
$$
-\sum_{k=1}^{m+2}\frac{k}{\binom{m+1}{k-1}}\bigg(\sum_{0 \le i_1 < \ldots < i_{k}=m+1} - \sum_{0=i_1 < \ldots < i_{k} \le m+1} \bigg) {\rm pol}_{k} U_{k}(x_{i_1}, x_{i_2}, \ldots, x_{i_k}).
$$
In the interior sum, all terms with $i_1=0$ and $i_{k}=m+1$ cancel away, so it can be represented as 
$$
\sum_{1 \le i_1 < \ldots < i_{k} =m+1} - \sum_{0=i_1 < \ldots < i_{k} \le m}.
$$
We add the vanishing expression
$$
\sum_{1\le i_1 < \ldots < i_{k} < m+1} - \sum_{0 <i_1 < \ldots < i_{k} \le m} =0,
$$
to put the result as 
$$
\sum_{1 \le i_1 < \ldots < i_{k} \le m+1} - \sum_{0 \le i_1 < \ldots < i_{k} \le m} .
$$
In other words, the difference of the right-hand sides of equations \eqref{eq aux 1} and \eqref{eq aux 2} equals
$$
-\sum_{k=1}^{m+2}\frac{k}{\binom{m+1}{k-1}}\bigg(\sum_{1 \le i_1 < \ldots < i_{k}\le m+1} - \sum_{0 \le i_1 < \ldots < i_{k} \le m} \bigg) {\rm pol}_{k} U_{k}(x_{i_1},  \ldots, x_{i_k}).
$$
According to \eqref{eq pol incomplete}, this can be represented as
$$
-\sum_{k=1}^{m+2}\frac{k}{\binom{m+1}{k-1}} \binom{m+1}{k} \Big({\rm pol}_{m+1} U(x_1,\ldots,x_{m+1}) -  {\rm pol}_{m+1} U(x_0,\ldots,x_{m})\Big).
$$
It remains to observe that
$$
\frac{\binom{m+1}{k}}{\binom{m+1}{k-1}}\ k=m+2-k,
$$
so that finally the difference of the right-hand sides of equations \eqref{eq aux 1} and \eqref{eq aux 2} equals
$$
-\sum_{k=1}^{m+2}(m+2-k) \Big({\rm pol}_{m+1} U(x_1,\ldots,x_{m+1}) -  {\rm pol}_{m+1} U(x_0,\ldots,x_{m})\Big)
$$
\begin{equation}\label{eq aux 4}
=-\Big(V_\epsilon(x_1,\ldots,x_{m+1})-V_\epsilon(x_0,\ldots,x_m)\Big).
\end{equation}
Combining \eqref{eq aux 3} and \eqref{eq aux 4}, we finish the proof.
\qed
\medskip

\noindent
{\bf Remark 1.} Usually, an integral of motion for a difference equation of order $m$ is understood as a function $I_\epsilon(x_0,\ldots,x_{m-1})$ invariant under map $\Psi_\epsilon$ in \eqref{eq f difference}, i.e., $I_\epsilon=I_\epsilon\circ \Psi_\epsilon$, which can be also expressed as 
$$
I_\epsilon(x_0,\ldots,x_{m-1})=I_\epsilon(x_1,\ldots,x_m).
$$
We say that a function $H_\epsilon(x_0,\ldots,x_{m-1},x_m)$ is a conserved quantity for a difference equation of order $m$ if
$$
H_\epsilon(x_0,\ldots,x_{m-1},x_m)=H_\epsilon(x_1,\ldots,x_m,x_{m+1})
$$
on solutions of the difference equation, i.e., if the previous formula holds true with $x_m=\Phi_\epsilon(x_0,\ldots,x_{m-1})$ and $x_{m+1}=\Phi_\epsilon(x_1,\ldots,x_m)$. Of course, upon this substitution the resulting expression is an integral of motion 
$$
I_\epsilon(x_0,\ldots,x_{m-1})=H_\epsilon\big(x_0,\ldots,x_{m-1},\Phi_\epsilon(x_0,\ldots,x_{m-1})\big),
$$ 
however it becomes much more algebraically complicated.
\medskip

\noindent
{\bf Remark 2.} The conserved quantity simplifies drastically, if the polynomial $U(x)$ is homogeneous of degree $m+2$. Indeed, then $V_\epsilon=0$, and $H_\epsilon=T_\epsilon$. In principle, the inhomogeneous case could be reduced to the homogeneous one by introducing an extra dependent variable. Our proof shows that the extra variable can be explicitly and cleanly eliminated.

\section{Examples}
\label{sect examples}

\paragraph{{\bf Case} {\boldmath${m=1}$.}}
Consider the system
\begin{equation}\label{eq Ham system}
\dot{x}=K\nabla U, 
\end{equation}
where $U:\bbR^{d}\to \bbR$ is a polynomial of degree 3, and $K$ is a non-degenerate skew-symmetric $d\times d$ matrix (so that the dimension $d$ is even). This is a canonical Hamiltonian system, and the Hamilton function $U(x)$ is an integral of motion. The right-hand side of equation \eqref{eq Ham system} is of degree 2, and the corresponding discretization by polarization is nothing but the Kahan discretization, see \eqref{eq Kahan}. According to Theorem \ref{th main}, difference equation \eqref{eq Kahan} possesses a conserved quantity \eqref{eq H eps} with
\begin{eqnarray}
T_\epsilon(x_0,x_1) & = & \frac{1}{\epsilon}\langle x_0,K^{-1}x_1\rangle, \\
V_\epsilon(x_0,x_1) & = & {\rm pol}_2U_2(x_0,x_1)+2\,{\rm pol}_2U_1(x_0,x_1). \qquad
\end{eqnarray}
This conserved quantity was found in \cite{S}. It turns into an integral of motion found in \cite{CMOQ1} upon substitution $x_1=\Psi_\epsilon(x_0)$.
Its continuous time limit is obtained by setting $x_0=x$, $x_1=x+\epsilon x+O(\epsilon^2)$ and then sending $\epsilon\to 0$:
\begin{eqnarray}
T_\epsilon(x_0,x_1) & \to & T_0(x,\dot x)=\langle x, K^{-1}\dot x\rangle, \label{eq m=1 T cont} \\
V_\epsilon(x_0,x_1) & \to &  V_0(x)=U_2(x)+2U_1(x). \label{eq m=1 V cont}
\end{eqnarray}
The resulting conserved quantity is unusual, as it contains $\dot{x}$. To put it in the usual form, one should use equations of motion \eqref{eq Ham system}. 
By virtue of \eqref{eq Ham system} and of Euler theorem on homogeneous functions, we find the following expression for the function \eqref{eq m=1 T cont}:
$$
 T_0(x) = \langle x, \nabla U(x) \rangle= 3U_3(x)+2U_2(x)+U_1(x).
$$
Upon adding \eqref{eq m=1 V cont}, we end up with the integral of motion $3U(x)$. 
\medskip

\paragraph{{\bf Case} {\boldmath${m=2}$.}}
Consider the system
\begin{equation}\label{eq Ham system 2nd order}
\ddot{x}=-K\nabla U, 
\end{equation}
where $U:\bbR^{d}\to \bbR$ is a polynomial of degree 4, and $K$ is a non-degenerate symmetric $d\times d$ matrix. This system is Lagrangian and admits an integral of motion
\begin{equation}\label{eq int 2nd order}
H(x,\dot x)=\frac{1}{2}\langle \dot x, K^{-1}\dot x\rangle +U(x).
\end{equation}
The right-hand side of equation \eqref{eq Ham system 2nd order} is of degree 3, and we consider the corresponding discretization by polarization, see \eqref{eq 2nd order discrete}. According to Theorem \ref{th main}, difference equation \eqref{eq 2nd order discrete} possesses a conserved quantity \eqref{eq H eps} with
\begin{eqnarray}
\epsilon^2 T_\epsilon(x_0,x_1,x_2) & = & \langle x_0,K^{-1}x_1\rangle+\langle x_1,K^{-1}x_2\rangle-2\langle x_0,K^{-1}x_2\rangle, \\
V_\epsilon(x_0,x_1,x_2) & = & {\rm pol}_3U_3(x_0,x_1,x_2)+2\,{\rm pol}_3U_2(x_0,x_1,x_2)+3\,{\rm pol}_3U_1(x_0,x_1,x_2). \qquad
\end{eqnarray}
This conserved quantity was found in \cite{S}. It turns into an integral of motion found in \cite{HQ} upon substitution $x_2=\Psi_\epsilon(x_0,x_1)$. It is instructive to look at the continuous time limit of the conserved quantity $H_\epsilon$. To perform this, we set $x_k=x+(k\epsilon)\dot{x}+\tfrac{1}{2}(k\epsilon)^2\ddot x+O(\epsilon^3)$, and then send $\epsilon\to 0$. We find:
\begin{eqnarray}
T_\epsilon(x_0,x_1,x_2) & \to & T_0(x,\dot x, \ddot x)=2\langle \dot x, K^{-1}\dot x\rangle-\langle x, K^{-1}\ddot x \rangle, \label{eq m=2 T cont} \\
V_\epsilon(x_0,x_1,x_2) & \to &  V_0(x)=U_3(x)+2U_2(x)+3U_1(x). \label{eq m=2 V cont}
\end{eqnarray}
Thus, we arrive at an unusual conserved quantity, as it contains $\ddot{x}$. To put it in the usual form, one should use equations of motion \eqref{eq Ham system 2nd order}.  By virtue of \eqref{eq Ham system 2nd order} and of Euler theorem on homogeneous functions, we find the following expression for the function \eqref{eq m=2 T cont}:
\begin{eqnarray*}
 T_0(x,\dot x, \ddot x) & = & 
 2\langle \dot x, K^{-1}\dot x\rangle+\langle x, \nabla U(x) \rangle\\
& = & 2\langle \dot x, K^{-1}\dot x\rangle+\big(4 U_4(x)+3U_3(x)+2U_2(x)+U_1(x)\big).
\end{eqnarray*}
Upon adding \eqref{eq m=2 V cont}, we end up with the integral of motion $4H(x,\dot x)$, see \eqref{eq int 2nd order}.
\medskip

\paragraph{{\bf Case} {\boldmath${m=3}$.}}
Consider the system
\begin{equation}\label{eq system 3rd order}
\dddot x=K\nabla U, 
\end{equation}
where $U:\bbR^{d}\to \bbR$ is a polynomial of degree 5, and $K$ is a non-degenerate skew-symmetric $d\times d$ matrix. It is a matter of a straightforward verification that system \eqref{eq system 3rd order} has the following integral of motion:
\begin{equation}\label{eq int 3rd order}
H(x,\dot x,\ddot x)=-\langle \dot x,K^{-1}\ddot x\rangle+U(x). 
\end{equation}
According to Theorem \ref{th main}, the discretization of \eqref{eq system 3rd order} by polarization possesses  a conserved quantity \eqref{eq H eps} with
\begin{eqnarray}
\epsilon^3 T_\epsilon(x_0,\ldots,x_3) & = & \langle x_0,K^{-1}x_1\rangle+\langle x_1,K^{-1}x_2\rangle+\langle x_2,K^{-1}x_3\rangle \nonumber\\
& & -3\big(\langle x_0,K^{-1}x_2\rangle+\langle x_1,K^{-1}x_3\rangle\big) \nonumber\\
& & +3\langle x_0,K^{-1}x_3\rangle,\\
V_\epsilon(x_0,\ldots,x_3) & = & \sum_{k=1}^4(5-k){\rm pol}_kU_k(x_0,\ldots,x_3).
\end{eqnarray}
For the continuous time limit, we set $x_k=\sum_{j=0}^3 \tfrac{(k\epsilon)^j}{j!}x^{(j)}+O(\epsilon^4)$, and then send $\epsilon\to 0$. We find:
\begin{eqnarray}
T_\epsilon(x_0,\ldots,x_3) & \to & T_0(x,\dot x,\ddot x,\dddot x)=-5\langle \dot x, K^{-1}\ddot x\rangle
+\langle x,K^{-1}\dddot x\rangle,   \qquad \label{eq m=3 T cont} \\
V_\epsilon(x_0,\ldots,x_3) & \to &  V_0(x)=\sum_{k=1}^4 (5-k) U_k(x). \label{eq m=3 V cont}
\end{eqnarray}
The put the resulting conserved quantity into the usual form, use equations of motion \eqref{eq system 3rd order}, and by virtue of Euler theorem on homogeneous functions, we find the following expression for the function \eqref{eq m=3 T cont}:
\begin{eqnarray*}
 T_0(x,\dot x, \ddot x,\dddot x) & = & -5\langle \dot x, K^{-1}\ddot x\rangle+\langle x, \nabla U(x) \rangle\\
& = & -5\langle \dot x, K^{-1}\ddot x\rangle+\sum_{k=1}^5 kU_k(x).
\end{eqnarray*}
Upon adding \eqref{eq m=3 V cont}, we end up with the integral of motion $5H(x,\dot x,\ddot x)$, see \eqref{eq int 3rd order}.
\medskip

\paragraph{{\bf Case} {\boldmath${m=4}$.}}
Consider the system
\begin{equation}\label{eq system 4th order}
x^{(4)}=-K\nabla U, 
\end{equation}
where $U:\bbR^{d}\to \bbR$ is a polynomial of degree 6, and $K$ is a non-degenerate symmetric $d\times d$ matrix. This is a Lagrangian system, 
and as such it admits an integral of motion
\begin{equation}\label{eq int 4th order}
H(x,\dot x,\ddot x,\dddot x)=\langle \dot x,K^{-1}\dddot x\rangle-\frac{1}{2}\langle \ddot x,K^{-1}\ddot x\rangle+U(x). 
\end{equation}
According to Theorem \ref{th main}, the discretization of \eqref{eq system 4th order} by polarization possesses  a conserved quantity \eqref{eq H eps} with
\begin{eqnarray}
\epsilon^4 T_\epsilon(x_0,\ldots,x_4) & = & \langle x_0,K^{-1}x_1\rangle+\langle x_1,K^{-1}x_2\rangle+\langle x_2,K^{-1}x_3\rangle
+\langle x_3,K^{-1}x_4\rangle \nonumber\\
& & -4\big(\langle x_0,K^{-1}x_2\rangle+\langle x_1,K^{-1}x_3\rangle+\langle x_2,K^{-1}x_4\rangle\big) \nonumber\\
& & +6\big(\langle x_0,K^{-1}x_3\rangle+\langle x_1,K^{-1}x_4\rangle\big) \nonumber\\
&& -4\langle x_0,K^{-1}x_4\rangle, \\
V_\epsilon(x_0,\ldots,x_4) & = & \sum_{k=1}^5(6-k){\rm pol}_kU_k(x_0,\ldots,x_4).
\end{eqnarray}
In the continuous time limit $\epsilon\to 0$, setting $x_k=\sum_{j=0}^4 \tfrac{(k\epsilon)^j}{j!}x^{(j)}+O(\epsilon^5)$, we find:
\begin{eqnarray}
T_\epsilon(x_0,\ldots,x_4) & \to & T_0(x,\dot x,\ldots,x^{(4)})=-3\langle \ddot x, K^{-1}\ddot x\rangle+6\langle \dot x, K^{-1}\dddot x \rangle
-\langle x,K^{-1}x^{(4)}\rangle,   \qquad \label{eq m=4 T cont} \\
V_\epsilon(x_0,\ldots,x_4) & \to &  V_0(x)=\sum_{k=1}^5 (6-k) U_k(x). \label{eq m=4 V cont}
\end{eqnarray}
To put this unusual conserved quantity, containing the highest derivative $x^{(4)}$, into the usual form, we use equations of motion \eqref{eq system 4th order}, and by virtue of Euler theorem on homogeneous functions, we find the following expression for the function \eqref{eq m=4 T cont}:
\begin{eqnarray*}
 T_0(x,\dot x, \ddot x,\dddot x,x^{(4)}) & = & -3\langle \ddot x, K^{-1}\ddot x\rangle+6\langle \dot x, K^{-1}\dddot x \rangle+\langle x, \nabla U(x) \rangle\\
& = & -3\langle \ddot x, K^{-1}\ddot x\rangle+6\langle \dot x, K^{-1}\dddot x \rangle+\sum_{k=1}^6 kU_k(x).
\end{eqnarray*}
Upon adding \eqref{eq m=2 V cont}, we end up with the integral of motion $6H(x,\dot x,\ddot x,\dddot x)$, see \eqref{eq int 4th order}.

\section{Conclusion}
We hope to have demonstrated the huge potential of the algebraic approach to derivation of conserved quantities for discrete time systems. We stress that the classical notion of an integral of motion should be augmented by alternative, non-standard concepts. In the present paper, this is the concept of a {\em conserved quantity depending on the number of iterates  higher than the order of the underlying difference equation}. A similarly successful notion (which is also much less popular than it deserves to be) is the device of {\em Hirota-Kimura bases}, compare \cite{PPS1, PPS2, PS}. The appeal to the integrable systems community is: there is still much more to be discovered even in the most classical areas!

\subsection*{Acknowledgements}

This paper is based on a part of a master thesis in Mathematics by the first author  supervised at the Technische Universit\"at Berlin by the second author. The first author is supported by the Deutsche Forschungsgemeinschaft (DFG), project number 460135501, NFDI 29/1 ``MaRDI - Mathematische Forschungsdateninitiative''.

\label{lastpage}
\end{document}